\documentclass[journal]{IEEEtranTIE}
\usepackage{graphicx}
\pdfoutput=1
\usepackage{cite}
\usepackage{picinpar}
\usepackage{amsmath}
\usepackage{url}
\usepackage{flushend}
\usepackage[latin1]{inputenc}
\usepackage{colortbl}
\usepackage{soul}
\usepackage{multirow}
\usepackage{pifont}
\usepackage{color}
\usepackage{alltt}
\usepackage[hidelinks]{hyperref}
\usepackage{enumerate}
\usepackage{siunitx}
\usepackage{breakurl}
\usepackage{epstopdf}
\usepackage{pbox}
\usepackage{subfigure}
\usepackage{amsfonts}

\newtheorem{assumption}{Assumption}

\newtheorem{remark}{Remark}

\newtheorem{theorem}{Theorem}

\newenvironment{proof}{{\indent \indent \it Proof:}}

\begin{document}
\title{	Smooth Attitude Tracking Control of a 3-DOF Helicopter with Guaranteed Performance}

\author{
	\vskip 1em	
	Xidong Wang
	\thanks{
Xidong Wang is with the Research Institute of Intelligent Control and Systems, School of Astronautics, Harbin Institute of Technology, Harbin 150001, China (e-mail: 17b904039@stu.hit.edu.cn).
	}
}

\maketitle
	
\begin{abstract}
This paper presents a new prescribed performance control scheme for the attitude tracking of the three degree-of-freedom (3-DOF) helicopter system with lumped disturbances under mechanical constraints. First, a novel prescribed performance function is defined to guarantee that the tracking error performance has a small overshoot in the transient process and converges to an arbitrary small region within a predetermined time in the steady-state process without knowing the initial tracking error in advance. Then, based on the novel prescribed performance function, an error transformation combined with the smooth finite-time control method we proposed before is employed to drive the elevation and pitch angles to track given desired trajectories with guaranteed tracking performance. The theoretical analysis of finite-time Lyapunov stability indicates that the closed-loop system is fast finite-time uniformly ultimately boundedness. Finally, comparative experiment results illustrate the effectiveness and superiority of the proposed control scheme.
\end{abstract}

\begin{IEEEkeywords}
Prescribed performance control (PPC), Prescribed performance function (PPF), Tracking control, Fast finite-time uniformly ultimately boundedness, 3-DOF helicopter.
\end{IEEEkeywords}

{}

\definecolor{limegreen}{rgb}{0.2, 0.8, 0.2}
\definecolor{forestgreen}{rgb}{0.13, 0.55, 0.13}
\definecolor{greenhtml}{rgb}{0.0, 0.5, 0.0}

\section{Introduction}

\IEEEPARstart{I}{n} recent years, the 3-DOF lab helicopter platform (Fig. 1) has received broad attention from researchers due to its similar dynamics with the real helicopter system \cite{Yang2020F}. In view of the high-performance attitude tracking control problem of the 3-DOF helicopter system, numerous advanced approaches have been proposed in recent years and validated via this experimental platform \cite{Yang2020F,2020arXiv,Kara2019F,Chen2018N,Zeghlache2017N,Castaneda2016F,Li2015F}. 
\begin{figure}[!t]\centering
	\includegraphics[width=8.5cm]{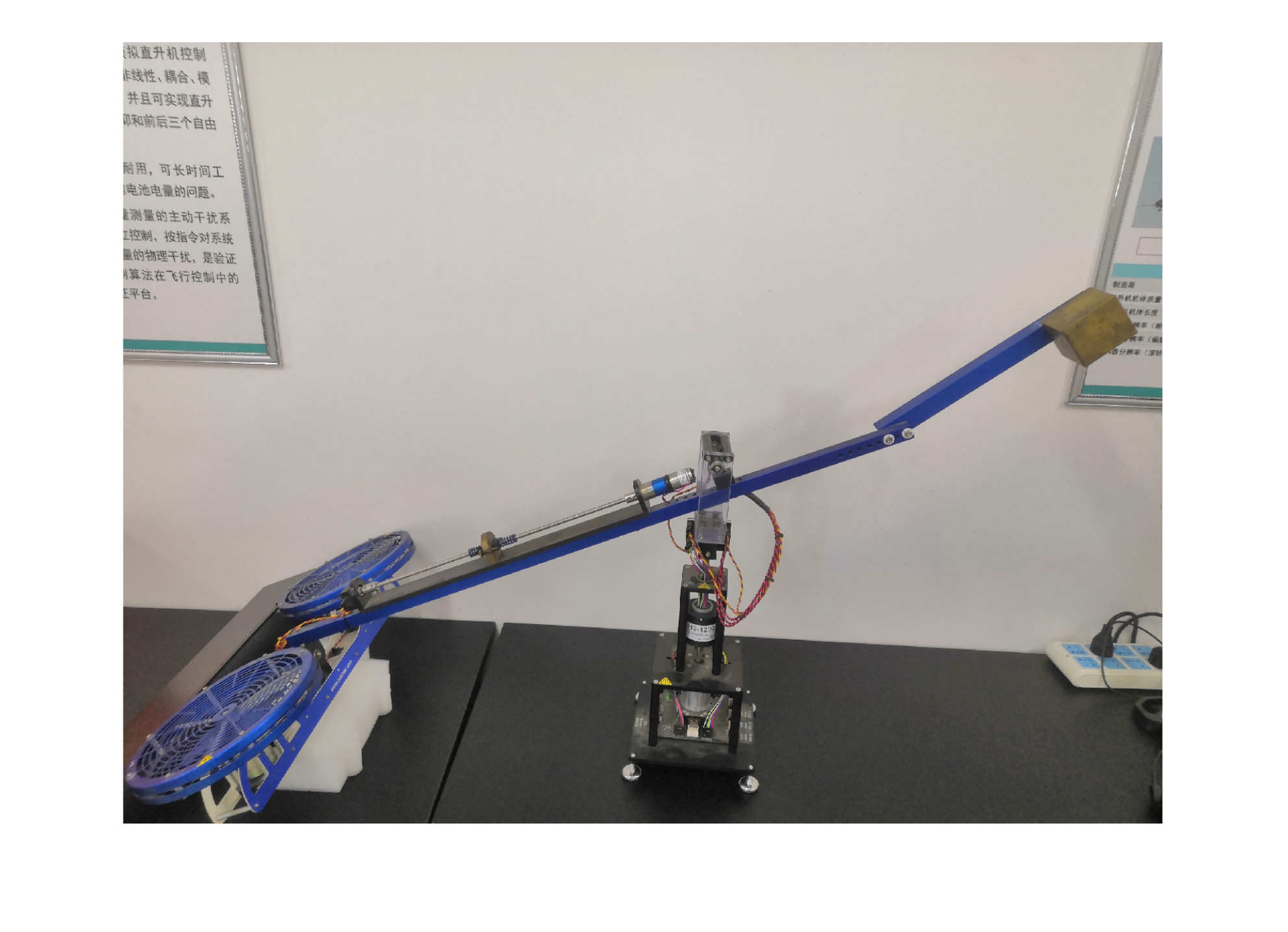}
	\caption{Structure of 3-DOF helicopter system under mechanical constraints}\label{FIG_1}
\end{figure}

Although the above-mentioned approaches obtain desired tracking performance, they only guarantee that the tracking error converges to a small region in the steady-state process. However, due to the mechanical constraints of the 3-DOF helicopter, the transient performance of attitude control needs to be concerned, embracing the overshoot and convergence rate of tracking errors. Recently, the prescribed performance control, first proposed by Bechlioulis and Rovithakis \cite{2008PPC}, has been extensively utilized to ensure both the transient and steady-state process within a prescribed performance. In \cite{2017overshootPPC}, a novel performance function is designed to tackle the tracking control problem with unknown initial errors and to satisfy the transient performance with small overshoot. The authors in \cite{2018finitePPC,2019finitePPC} propose a new concept named finite-time performance function, which ensures that the tracking error converges to a predefined small region within finite time. In \cite{PPC1,2019errorPPC}, two improved prescribed performance functions are present to restrain tracking errors within the prescribed envelops while precise initial errors are not required. The modified performance function is designed to address the tracking control problem of the quadrotor aircraft, which avoids singularity and ensures small overshoot during the transient process \cite{2020overshootPPC}.

Inspired by \cite{2018finitePPC,2020overshootPPC}, this paper aims to present a new prescribed performance control strategy for the attitude tracking of 3-DOF helicopter with lumped disturbance under mechanical constraints. A novel prescribed performance function is firstly designed to guarantee that the tracking error performance has a small overshoot in the transient process, and converges to an arbitrary small region within a predetermined time in the steady-state process. In addition, this performance function overcomes the demerit that the existing methods need to know the initial error in advance. Then, based on the novel performance function, the original tracking error system is transformed into an unrestrained system via error transformation. By means of the smooth finite-time control approach we proposed before, the presented control scheme achieves the smooth tracking of elevation and pitch angles with guaranteed tracking performance. 

The theoretical analysis of finite-time Lyapunov stability indicates that the closed-loop system is fast finite-time uniformly ultimately boundedness. The effectiveness as well as the superiority of our control scheme is validated by comparative numerical experiments.

The remainder of this paper is summarized as: In Section II, the model description and control objective of the 3-DOF helicopter system are given. A novel prescribed performance function is defined and an error transformation method is present in Section III. Section IV shows the controller design process and stability analysis of the closed-loop system. Section V presents comparative simulation result and discussion. The conclusion of this paper is provided in Section VI.

Notation: ${\mathop{\rm sign}\nolimits} ( \cdot )$ represents the standard signum function and the inverse function of $\Phi \left( z \right)$ is denoted by ${\Phi ^{ - 1}}\left( z \right)$.  
\section{Problem Formulation}
\subsection{Model description}
The specific structure of the 3-DOF helicopter system studied in this paper is described in [2]. Limited by the mechanical constraints, the operating domain of the helicopter system is defined as follows
\begin{equation}
\begin{aligned}
- {27.5^o} \le \alpha  \le  + {30^o}\\
 - {45^o} \le \beta  \le  + {45^o}
\end{aligned}
\end{equation}
where $\alpha $ and $\beta $ are elevation angle and pitch angle, respectively.

In terms of the derivation of [2], the unified tracking error model of the elevation and pitch channel can be formulated as follows
\begin{equation}
\begin{aligned}
&{\dot e_i} = {e_{i + 1}}\\
&{\dot e_{i + 1}} = {g_i}{v_i} + {f_i} + {d_i}
\end{aligned}
\end{equation}
where $i = 1,2$. ${e_1} = \alpha - {x_{\alpha d}}(t),{e_3} = \beta - {x_{\beta d}}(t)$ represent the tracking errors of the elevation and pitch angle respectively. The definitions and values of other symbols can be found in [2].
\subsection{Control objective}
The control objective is to design the smooth finite-time controllers such that the elevation and pitch angles can track the given desired trajectories respectively within constrained errors.
\begin{remark}
As the elevation and pitch angles are limited by the mechanical structure during the entire control process, it is necessary to restrict the tracking errors, especially in the transient process.
\end{remark}
\section{Prescribed Performance Function and Error Transformation}
\subsection{Prescribed performance function}
Motivated by \cite{2018finitePPC,2020overshootPPC}, we develop a novel prescribed performance function, which is defined as follows
\begin{equation}
{P_u}\left( t \right) =
{\begin{cases}
\left( {e\left( 0 \right) + \delta  - {\lambda _\infty }} \right)\frac{{{T_f} - t}}{{{T_f}}}{e^{\left( {1 - \frac{{{T_f}}}{{{T_f} - t}}} \right)}} + {\lambda _\infty },t \in \left[ {0,{T_f}} \right) \hfill \\
{\lambda _\infty },t \in \left[ {{T_f}, + \infty } \right)\hfill \\
\end{cases} }
\end{equation}
\begin{equation}
{P_l}\left( t \right) =
{\begin{cases}
\left( {e\left( 0 \right) - \delta  + {\lambda _\infty }} \right)\frac{{{T_f} - t}}{{{T_f}}}{e^{\left( {1 - \frac{{{T_f}}}{{{T_f} - t}}} \right)}} - {\lambda _\infty },t \in \left[ {0,{T_f}} \right) \hfill \\
-{\lambda _\infty },t \in \left[ {{T_f}, + \infty } \right)\hfill \\
\end{cases} }
\end{equation}
where $\delta > 0,{\lambda _\infty }> 0,{T_f}> 0$ are design parameters.

Then, the following inequality is introduced to guarantee the prescribed performance of the tracking error.
\begin{equation}
{P_l}\left( t \right) < e\left( t \right) < {P_u}\left( t \right)
\end{equation}
\begin{figure}[!t]\centering
	\includegraphics[width=0.5\textwidth]{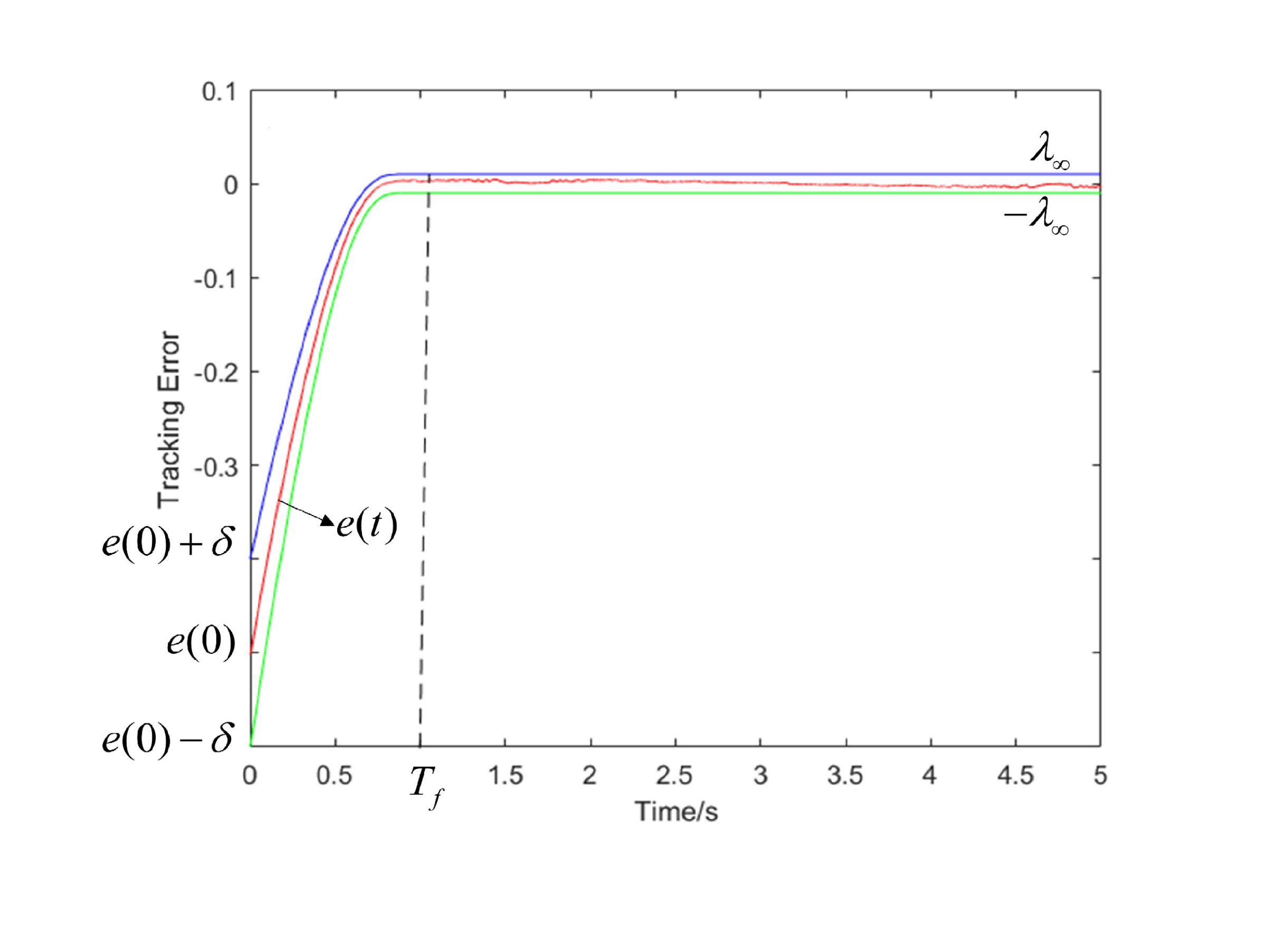}
	\caption{Schematic of the novel designed prescribed performance (5)}
\end{figure}

Fig. 2 presents the novel designed prescribed performance (5). It is demonstrated from Fig. 2, (3) and (4) that the tracking error performance has a small overshoot as well as converges to an arbitrary small range $\left( { - {\lambda _\infty },{\lambda _\infty }} \right)$ within finite time where the settling time ${T_f}$ is independent of the initial condition and can be preset by the users. Moreover, this performance function overcomes the demerit that the traditional PPC [8] needs to know the precise initial error in advance.
\subsection{Error transformation}
The purpose of error transformation is to transform the constrained tracking error system into unconstrained system. Then, the controller, designed to achieve the unconstrained system bounded, can guarantee that the original tracking error converges to the prescribed envelope. The error transformation for this paper is designed as follows
\begin{equation}
e\left( t \right) = {P_u}\left( t \right)\Phi \left( {z\left( t \right)} \right) + {P_l}\left( t \right)\left( {1 - \Phi \left( {z\left( t \right)} \right)} \right)
\end{equation}
where $z\left( t \right)$ is the transformed error and $\Phi \left( {z\left( t \right)} \right)$ is designed as
\begin{equation}
\Phi \left( {z\left( t \right)} \right) = \frac{1}{\pi }\arctan \left( {z\left( t \right)} \right) + \frac{1}{2}
\end{equation}
\begin{theorem}
If the transformed error $z\left( t \right)$ is bounded, then $e\left( t \right)$ will always be constrained to the prescribed envelope defined by (5) for all $t \ge 0$.
\end{theorem}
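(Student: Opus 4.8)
The plan is to exploit the structural form of the error transformation (6) together with the range of the interpolation function $\Phi$ in (7). First I would record the elementary properties of $\Phi$: since $\arctan(\cdot)$ maps $\mathbb{R}$ onto the open interval $(-\pi/2,\pi/2)$, the function $\Phi$ is smooth, strictly increasing, and satisfies $0<\Phi(z)<1$ for every finite $z$; moreover, if $z(t)$ is bounded, say $|z(t)|\le\bar z$ for all $t\ge 0$, then $\Phi(z(t))$ stays in the compact subinterval $[\Phi(-\bar z),\Phi(\bar z)]\subset(0,1)$, so there are constants $0<\underline\Phi\le\Phi(z(t))\le\overline\Phi<1$.

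Next I would rewrite (6) in the convex-combination form $e(t)=P_l(t)+\Phi(z(t))\bigl(P_u(t)-P_l(t)\bigr)$, which makes the envelope bounds transparent once the sign of $P_u(t)-P_l(t)$ is known. Then I would verify that $P_u(t)>P_l(t)$ for all $t\ge 0$ directly from (3)--(4): on $[T_f,\infty)$ this difference equals $2\lambda_\infty>0$; on $[0,T_f)$, introducing the auxiliary function $h(t)=\tfrac{T_f-t}{T_f}e^{1-T_f/(T_f-t)}$ one gets $P_u(t)-P_l(t)=2(\delta-\lambda_\infty)h(t)+2\lambda_\infty$, and since $h$ is positive on $[0,T_f)$ and decreasing with $h(0)=1$, this expression is bounded below by $2\min\{\delta,\lambda_\infty\}>0$ regardless of the sign of $\delta-\lambda_\infty$. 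A brief check of continuity of $P_u,P_l$ at $t=T_f$ ties the two pieces of the piecewise definition together so the bound is global.

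Finally, combining the two facts: because $0<\Phi(z(t))<1$ and $P_u(t)-P_l(t)>0$, the convex combination obeys $P_l(t)=P_l(t)+0\cdot(P_u(t)-P_l(t))<e(t)<P_l(t)+1\cdot(P_u(t)-P_l(t))=P_u(t)$, which is exactly inequality (5), for all $t\ge 0$. I would then remark that the boundedness hypothesis on $z(t)$ is what upgrades this to a \emph{uniform} margin, namely $P_l(t)+\underline\Phi\,(P_u(t)-P_l(t))\le e(t)\le P_l(t)+\overline\Phi\,(P_u(t)-P_l(t))$, keeping $e(t)$ strictly inside the envelope uniformly in time and away from its boundary, which is the form needed later for the stability argument.

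The only step requiring a little care --- the ``main obstacle'' --- is the sign analysis of $P_u(t)-P_l(t)$ on $[0,T_f)$ in the case $\delta<\lambda_\infty$, where the monotonicity estimate $0<h(t)\le 1$ must be invoked to conclude positivity; the rest of the argument is immediate from the bijective, order-preserving character of $\Phi$ onto $(0,1)$.
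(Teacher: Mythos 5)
Your proof is correct and follows essentially the same route as the paper: bound $\Phi(z(t))$ strictly inside $(0,1)$ using the boundedness of $z(t)$, rewrite (6) as $e(t) = \left( P_u(t) - P_l(t) \right)\Phi(z(t)) + P_l(t)$, and read off (5). The only difference is that you explicitly verify $P_u(t) - P_l(t) > 0$ on $[0,T_f)$ (including the case $\delta < \lambda_\infty$, via $0 < h(t) \le 1$), a positivity fact the paper's proof uses implicitly without stating it.
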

\begin{proof}
By the boundedness of $z\left( t \right)$, we have that there exists $\eta  \in \left( {0, + \infty } \right)$ such that $ - \eta  \le z\left( t \right) \le \eta $.

Then we have the following estimates
\begin{equation}\small
0 < \frac{1}{\pi }\arctan \left( { - \eta } \right) + \frac{1}{2} \le \Phi \left( {z\left( t \right)} \right) \le \frac{1}{\pi }\arctan \left( \eta  \right) + \frac{1}{2} < 1
\end{equation}

Note that (6) can be rewritten as
\begin{equation}
e\left( t \right) = \left( {{P_u}\left( t \right) - {P_l}\left( t \right)} \right)\Phi \left( {z\left( t \right)} \right) + {P_l}\left( t \right)
\end{equation}

Applying the range of $\Phi \left( {z\left( t \right)} \right)$, we obtain
\begin{equation}
{P_l}\left( t \right) < e\left( t \right) < {P_u}\left( t \right)
\end{equation}

The proof is completed.
\end{proof}
\section{Controller Design}
This section takes the elevation channel as an example to expound the design procedure of the controller. Following a similar process, the controller for the pitch channel can also be designed.

Considering the tracking error system of the elevation channel
\begin{equation}
\begin{aligned}
&{\dot e_1} = {e_2}\\
&{\dot e_2} = {g_1}{v_1} + {f_1} + {d_1}
\end{aligned}
\end{equation}

Define ${z_1}\left( t \right)$ as the transformed error of ${e_1}\left( t \right)$, by adopting the error transformation (6), one can obtain
\begin{equation}
\begin{aligned}
&{{\dot z}_1} = {z_2}\\
&{{\dot z}_2} = {M_1} + {M_2} + {M_3}\left( {{g_1}{v_1} + {f_1} + {d_1}} \right)
\end{aligned}
\end{equation}
where
\begin{equation}
{M_3} = \frac{{d{\Phi ^{ - 1}}\left( {{z_1}\left( t \right)} \right)}}{{d{z_1}}} \cdot \frac{1}{{{P_u} - {P_l}}}
\end{equation}
with ${M_3} > 0$ and $\Phi \left( {{z_1}\left( t \right)} \right),{P_u},{P_l}$ defined in (7), (3), (4).

For designing the controller, we make the following assumption.
\begin{assumption}
There exists unknown positive constants $\mu $ such that 
\begin{equation}
\left| {\frac{d}{{dt}}\left( {{M_3} \cdot {d_1}} \right)} \right| \le \mu 
\end{equation}
\end{assumption}

A non-singular integral sliding mode surface \cite{2002Surface} is defined as follows
\begin{equation}
s = {z_2} + \int_0^t {{z_s}d\tau } 
\end{equation}
where ${z_s} = {\gamma _1}{\left| {{z_1}} \right|^p}{\mathop{\rm sgn}} ({z_1}) + {\gamma _2}{\left| {{z_2}} \right|^{2p/(1 + p)}}{\mathop{\rm sgn}} ({z_2})$ and ${\gamma _1} > 0,{\gamma _2} > 0,p \in (0,1)$. 

Based on the smooth finite-time control approach we proposed before [2], the elevation channel controller is designed as
\begin{equation}
\begin{aligned}
{v_1} = &g_1^{ - 1}\left( {M_3^{ - 1}{w_1} - {f_1}} \right)\\
{w_1} =  &- {L_1}(t){\left| s \right|^{\frac{{m - 1}}{m}}}{\mathop{\rm sgn}} (s) - {L_2}(t)s - z - {M_1} \\
 &- {M_2}- \int_0^t {\left( {{L_3}(t){{\left| s \right|}^{\frac{{m - 2}}{m}}}{\mathop{\rm sgn}} (s) + {L_4}(t)s} \right)d\tau } 
\end{aligned}
\end{equation}
where the formulation of ${L_1}(t),{L_2}(t),{L_3}(t),{L_4}(t)$ and the range of control parameters can be found in [2].
\begin{theorem}
Consider the transformed error system (12) with lumped disturbance satisfying  \emph{Assumption 1}, under the proposed control law (16) with the sliding mode surface defined in (15). Then the following conclusions can be obtained: 1) The transformed tracking error is fast finite-time uniformly ultimately boundedness. 2) The original tracking error lies in the prescribed envelope constrained by (5). 
\end{theorem}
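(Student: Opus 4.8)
The plan is to proceed in three stages: first close the loop and reduce the transformed error system (12) under the controller (16) to a perturbed sliding-variable system; then run a finite-time Lyapunov argument on the sliding variable to obtain fast finite-time uniform ultimate boundedness; and finally propagate this to the transformed error $z_1$ and invoke \emph{Theorem 1} for the envelope claim. Concretely, I would substitute $v_1 = g_1^{-1}(M_3^{-1}w_1 - f_1)$ into the $\dot z_2$ equation of (12); the $M_3 f_1$ terms cancel, leaving $\dot z_2 = M_1 + M_2 + w_1 + M_3 d_1$. Inserting $w_1$ from (16) cancels $M_1 + M_2$ and the $z_s$ feedback against $\dot s = \dot z_2 + z_s$, so that $\dot s = -L_1(t)|s|^{(m-1)/m}\mathrm{sgn}(s) - L_2(t)s - \zeta + M_3 d_1$, where $\zeta := \int_0^t\!\big(L_3(\tau)|s|^{(m-2)/m}\mathrm{sgn}(s) + L_4(\tau)s\big)d\tau$ satisfies $\dot\zeta = L_3(t)|s|^{(m-2)/m}\mathrm{sgn}(s) + L_4(t)s$. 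Setting $\chi := \zeta - M_3 d_1$ absorbs the matched disturbance into an auxiliary state whose derivative is perturbed only by $\tfrac{d}{dt}(M_3 d_1)$, bounded by $\mu$ under \emph{Assumption 1}. This produces the cascade $\dot s = -L_1|s|^{(m-1)/m}\mathrm{sgn}(s) - L_2 s - \chi$, $\dot\chi = L_3|s|^{(m-2)/m}\mathrm{sgn}(s) + L_4 s - \Delta(t)$ with $|\Delta|\le\mu$, which is exactly the structure handled by the smooth finite-time controller of [2].

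Next I would invoke the Lyapunov construction of [2]: under the stated gain ranges for $L_1,\dots,L_4$ there is a (generally non-smooth, homogeneous-of-degree-one) Lyapunov function $V(s,\chi)$ for the $(s,\chi)$ cascade obeying a differential inequality of the form $\dot V \le -c_1 V^{a} - c_2 V + c_3$ with $a\in(0,1)$, $c_1,c_2>0$, and residual $c_3 = c_3(\mu)$. A standard comparison argument then gives fast finite-time convergence of $(s,\chi)$, hence of $s$, into a residual set within a settling time bounded uniformly in the initial data, i.e. fast finite-time uniform ultimate boundedness of $s$. This step, together with the well-definedness facts $M_3>0$ and $P_u-P_l \ge 2\min\{\delta,\lambda_\infty\}>0$ that rule out finite escape of $z_1$ over $[0,T_f)$, is the heart of the proof and the part I expect to be the main obstacle, since it relies on the non-smooth Lyapunov estimate and the precise gain conditions inherited from [2]; the remainder is essentially bookkeeping.

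With $s$ bounded — and, after the settling time, small — I would then restrict to the reduced dynamics near $s=0$, where $\dot z_1 = z_2$ and $\dot z_2 = -z_s = -\gamma_1|z_1|^p\mathrm{sgn}(z_1) - \gamma_2|z_2|^{2p/(1+p)}\mathrm{sgn}(z_2)$, a homogeneous double integrator known to be finite-time stable. Treating the residual of $s$ as a bounded (eventually small) input and using an ISS-type estimate for this homogeneous system shows that $(z_1,z_2)$ is itself fast finite-time uniformly ultimately bounded, which establishes conclusion 1). In particular $z_1(t)$ is bounded for all $t\ge 0$, so the hypothesis of \emph{Theorem 1} is met and we conclude $P_l(t) < e_1(t) < P_u(t)$ for all $t\ge 0$; that is, the original elevation tracking error remains within the prescribed envelope (5), which is conclusion 2). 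The pitch channel follows by the identical argument, completing the proof.
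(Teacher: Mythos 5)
Your proposal is correct and takes essentially the same route as the paper: conclusion 1) is obtained from the smooth finite-time result of [2] applied to the closed-loop sliding-variable dynamics, and conclusion 2) follows by invoking \emph{Theorem 1} once boundedness of the transformed error is established. The paper's own proof simply cites \emph{Proposition 1} of [2] together with \emph{Theorem 1}; your explicit derivation of the $(s,\chi)$ cascade, the reduced dynamics on the sliding surface, and the positivity checks on $M_3$ and $P_u-P_l$ just spells out what that citation is relied upon to cover.
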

\begin{proof}
By applying the \emph{Proposition 1} in [2], the conclusion 1) can be easily proved. The conclusion 1) shows that the transformed tracking error is bounded. Then applying \emph{Theorem 1}, we can further obtain that the original tracking error will always be constrained to the prescribed envelope defined by (5) for all $t \ge 0$. The proof is completed.
\end{proof}
\section{Numerical Simulations}
In this section, we take the simulation experiment of elevation angel tracking as an example to expound the effectiveness of our proposed control scheme.
\subsection{Case 1: Attitude tracking control with guaranteed performance}
In Case 1, the initial elevation angle of the 3-DOF helicopter system is $-24^o$. The lumped disturbance is assumed as ${d_1}(t) = 0.2\sin (t)$. The desired trajectory of elevation angle is given as:
\begin{equation}
{x_{\alpha d}}(t){\rm{ = }}0.2\sin (0.08t - \frac{\pi }{2})
\end{equation}

The parameters of our proposed PPF are set as ${T_f} = 1.5,\delta  = 0.1,{\lambda _\infty } = 0.01$, while other parameters of the control law (16) can be found in [2].
\begin{figure}[!t]\centering
	\includegraphics[width=8.5cm]{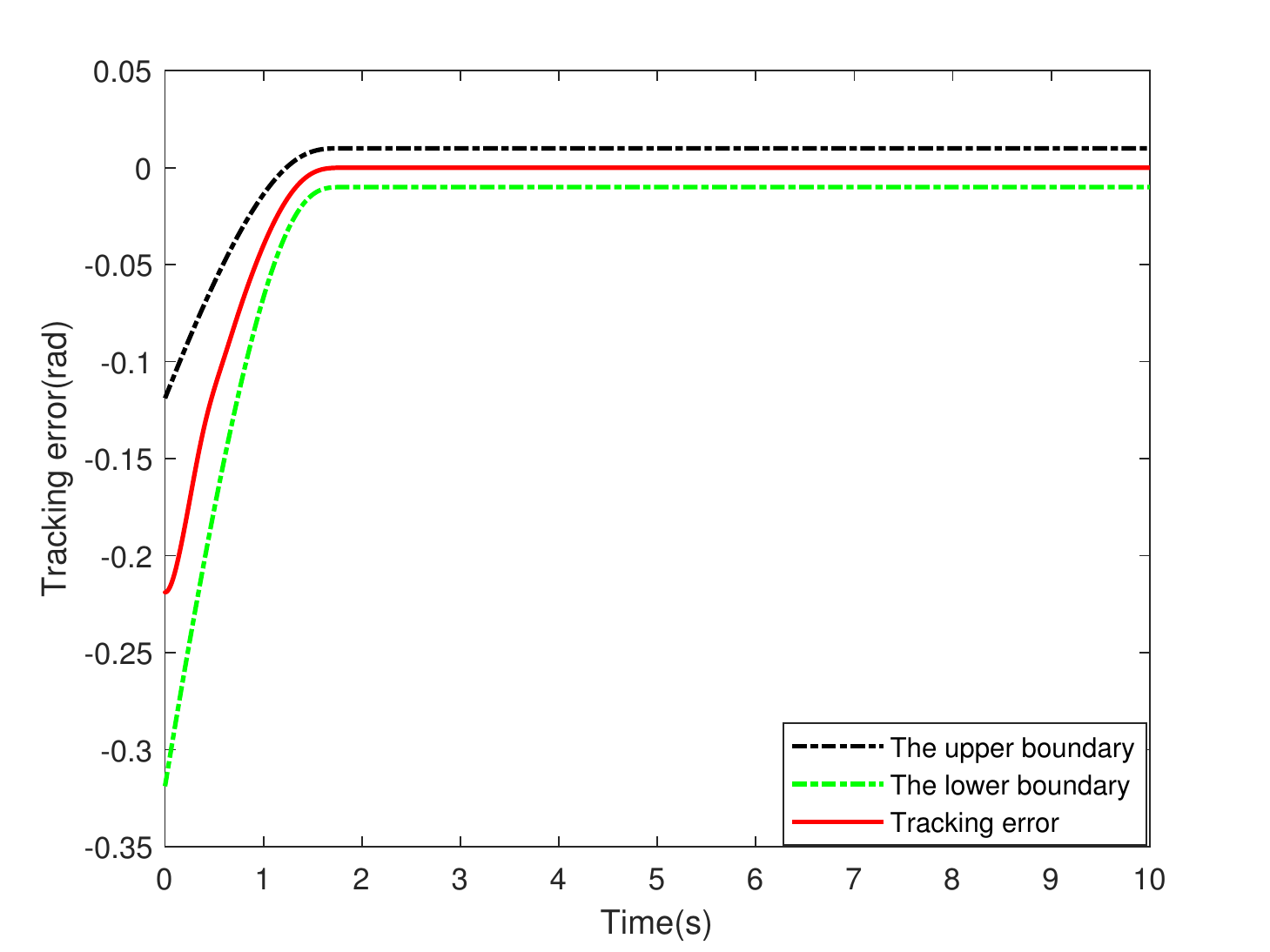}
	\caption{The curves of tracking error and boundary}
\end{figure}

Fig. 3 illustrates the response of the tracking error by utilizing our proposed control scheme, which shows that the tracking error of elevation angel can be kept within the prescribed envelope defined by (5). Moreover, the tracking error performance has a small overshoot in the transient process, and converges to $\left( { - 0.01,0.01} \right)$ within our preset $1.5s$ in the steady-state process without knowing the initial tracking error in advance. 
\subsection{Case 2: Contrast experiment}
In Case 2, the initial conditions of the 3-DOF helicopter, the desired trajectory, and the parameters of the proposed control scheme are set to be the same as in Case 1.

For comparison, the control scheme proposed in \cite{Tran2020} is employed, which adopts the PPF: $\rho \left( t \right) = \left( {{\rho _0} - {\rho _\infty }} \right){e^{ - kt}} + {\rho _\infty }$ with ${\rho _0}{\rm{ = 0}}{\rm{.48,}}{\rho _\infty } = 0.01,k = 2$ and utilizes the error transformation: ${e_1}\left( t \right) = \rho \left( t \right)\Theta \left( {{z_1}} \right)$,where
\begin{equation}
\Theta \left( {{z_1}} \right){\rm{ = }}\frac{{{e^{{z_1}}} - {e^{ - {z_1}}}}}{{{e^{{z_1}}} + {e^{ - {z_1}}}}}
\end{equation}

The other parameters of the control scheme proposed in [16] can be found in [2].
\begin{figure}
	\centering
	\subfigure[The curves of tracking error]{
	\includegraphics[width=0.4\textwidth]{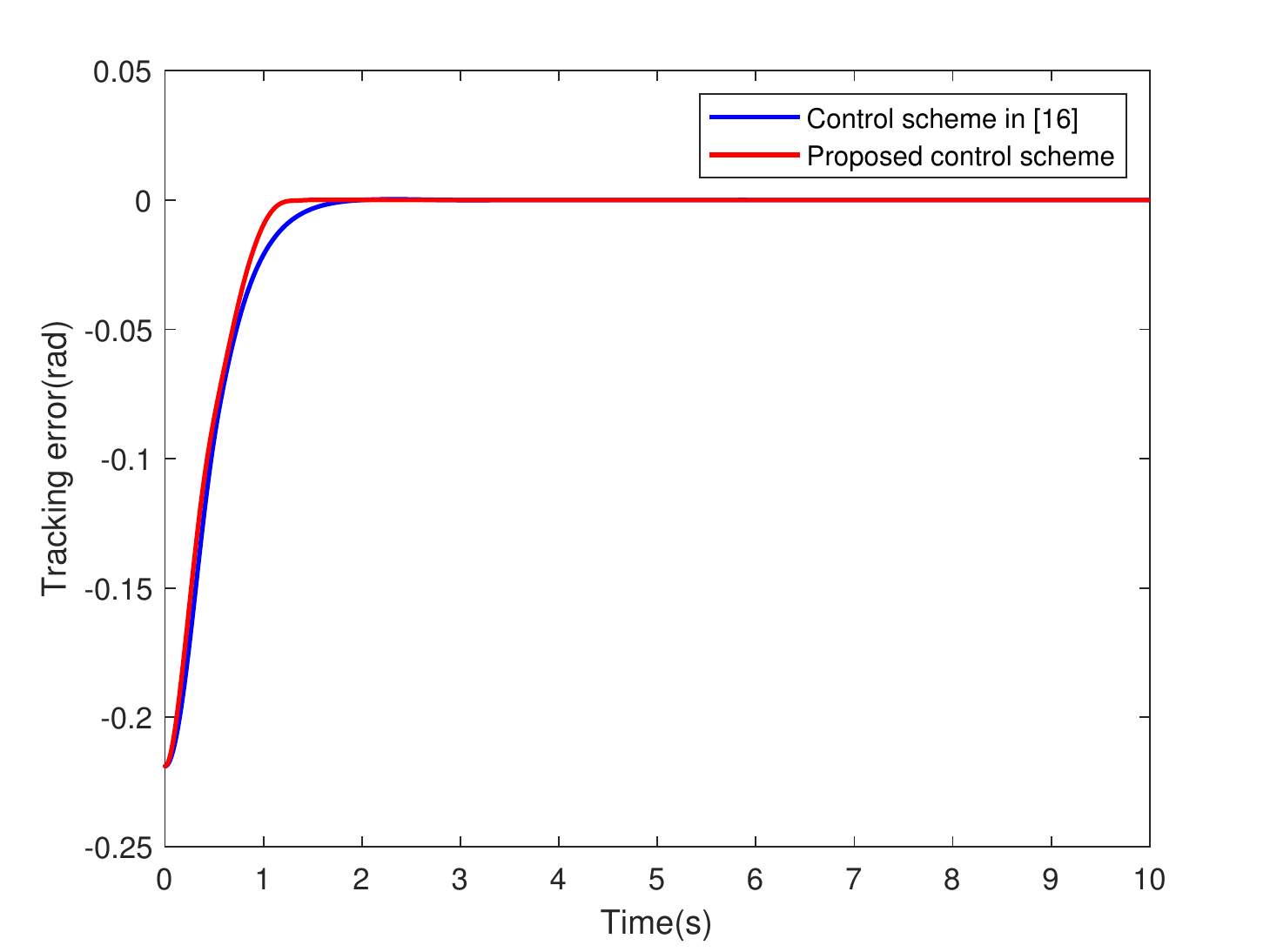}}
	\subfigure[Partial enlarged graph of tracking error transient response]{
	\includegraphics[width=0.4\textwidth]{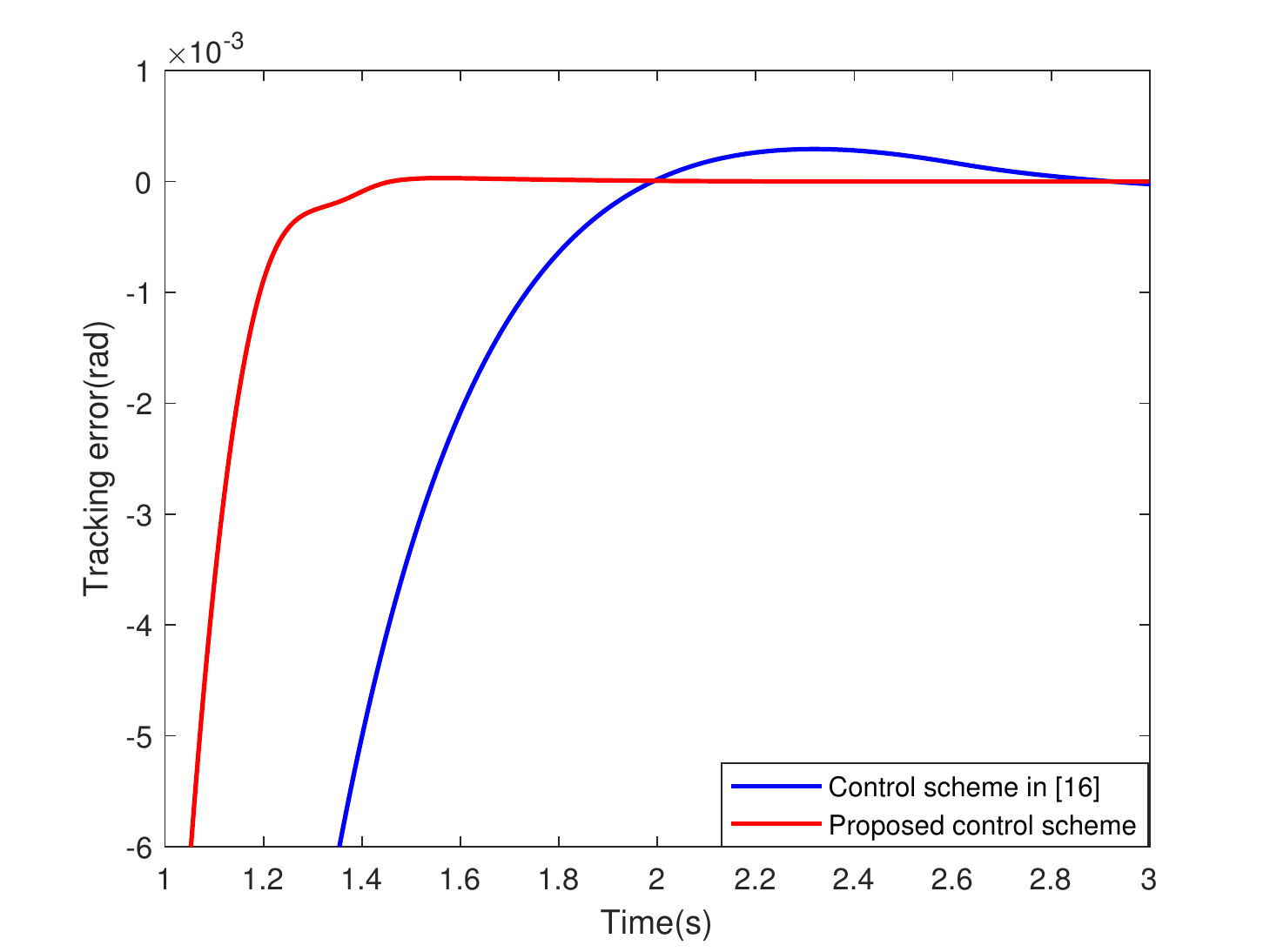}}
	\subfigure[Partial enlarged graph of tracking error steady-state response]{
	\includegraphics[width=0.4\textwidth]{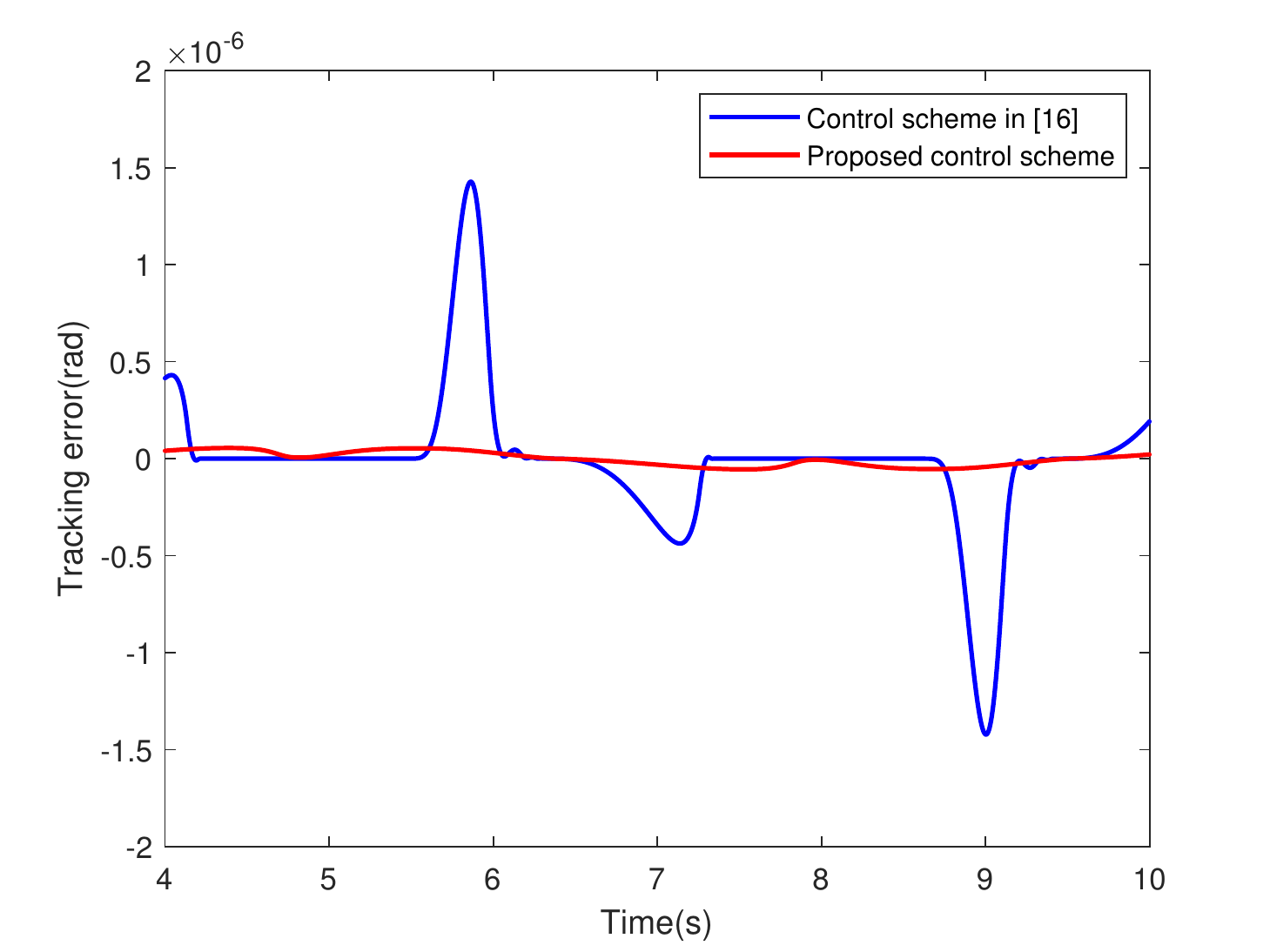}}
	\caption{Results of Case 2}
\end{figure}

Fig. 4: (a) presents the response of the tracking error by utilizing our proposed control scheme and the method in [16]. Fig. 4: (b) demonstrates the partial enlargement of tracking error in transient process. Fig. 4: (c) illustrates the partial enlargement of tracking error in steady-state process. Fig. 4: (a) and (b) show that the tracking error performance of our proposed control scheme has a smaller overshoot than that in [16], and can converge to an arbitrary small region within a predetermined time without knowing the initial tracking error in advance. In addition, it can be seen from Fig. 4 (c) that our proposed control scheme provides more accurate and smoother output for elevation tracking control.
\section{Conclusion}
In this paper, a novel prescribed performance control scheme has been proposed to cope with the attitude tracking control problem of a 3-DOF helicopter system with lumped disturbances under mechanical constraints. First, a novel prescribed performance function is designed to ensure that the tracking error performance has a small overshoot in the transient process, and converges to an arbitrary small range within a preset time in the steady-state process without knowing the initial tracking error in advance. Then, based on the novel prescribed performance function and error transformation, the predefined performance tracking error system is transformed into an unrestrained error system. With the aid of the smooth finite-time control approach we proposed before, the presented control scheme achieves the smooth tracking of elevation and pitch angles with guaranteed tracking performance. By using finite-time Lyapunov stability theory, the closed-loop system is proven to be fast finite-time uniformly ultimately boundedness. The superiority of the presented approach is validated by comparative numerical simulations.

\bibliographystyle{Bibliography/IEEEtranTIE}
\bibliography{Bibliography/IEEEabrv,myRef}\ 

\end{document}